\providecommand{\algorithmname}{Algorithm}
\def\BibTeX{{\rm B\kern-.05em{\sc i\kern-.025em b}\kern-.08em T\kern-.1667em\lower.7ex\hbox{E}\kern-.125emX}}
\def\argmax{\operatornamewithlimits{arg\,max}}
\def\R{\text{R}}
\def\E{\text{E}}
\newtheorem{prop}{Proposition}[section]
\newcommand{\etc}{\textit{etc. }}
\title{Analyzing and Enhancing Queue Sampling for Energy-Efficient Remote Control of Bandits}
\author{\IEEEauthorblockN{Hiba Dakdouk, Mohamed Sana, Mattia Merluzzi}
\IEEEauthorblockA{CEA-Leti, Université Grenoble Alpes, F-38000 Grenoble, France\\
Email: \{hiba.dakdouk, mohamed.sana, mattia.merluzzi\}@cea.fr}}
\newcommand{\titleheader}{This work has been accepted for publication in 2024 IEEE International Mediterranean Conference on Communications and Networking}
\def\ps@IEEEtitlepagestyle{%
\def\@oddhead{\mbox{}\scriptsize \titleheader \rightmark \hfil }%
}
\begin{document}

\maketitle

\begin{abstract}
In recent years, the integration of communication and control systems has gained significant traction in various domains, ranging from autonomous vehicles to industrial automation and beyond.
Multi-armed bandit (MAB) algorithms have proven their effectiveness as a robust framework for solving control problems.
In this work, we investigate the use of MAB algorithms to control remote devices, which faces considerable challenges primarily represented by latency and reliability.
We analyze the effectiveness of MABs operating in environments where the action feedback from controlled devices is transmitted over an unreliable communication channel and stored in a Geo/Geo/1 queue.
We investigate the impact of queue sampling strategies on the MAB performance, and introduce a new stochastic approach.
Its performance in terms of regret is evaluated against established algorithms in the literature for both upper confidence bound (UCB) and Thompson Sampling (TS) algorithms.
Additionally, we study the trade-off between maximizing rewards and minimizing energy consumption.

\end{abstract}
%========================================================
\begin{IEEEkeywords}
Multi-armed Bandits, Joint Communication and Control, Queue Sampling, Latency and Reliability
\end{IEEEkeywords}
%========================================================
\section{Introduction}

In the era of Industry 4.0, the primary goal is to reduce human involvement within industrial processes \cite{strinati20196g}.
This requires a tight integration of automatic control systems and communication technologies in order to support various  application areas, such as  intelligent traffic control systems and automatic warehouse management systems\cite{bemporad2010networked}.
One significant challenge in such systems is the efficient control of remote devices to achieve desired objectives in dynamic and uncertain environments. 
In these scenarios, decisions must be taken in real-time and inferred from limited information (generated by the device itself or from the surrounding environment), while also considering the trade-offs between exploration and exploitation of different control strategies.

Multi-armed bandit (MAB) algorithms offer a powerful framework for addressing control problems in these settings. 
Originally developed in the context of sequential decision-making under uncertainty, MAB algorithms were designed to navigate the exploration-exploitation dilemma encountered by single agents, \ien, the same agent selects actions and interacts with the environment.
In the context of joint communication and control, MAB algorithms provide a principled approach to control remote devices/systems.
Unlike traditional setups, in remote MAB-controlled systems, the agent serving as a decision-maker is decoupled from the device that interacts with the environment.
This applies to various applications including autonomous vehicles, robotics, distributed computing systems, \etc
However, in such settings, latency and reliability stand out as major challenges that are not addressed in traditional MAB algorithms.
The potential transmissions interruptions coupled with the time gap between applying an action and observing its corresponding feedback, can significantly mitigate the system effectiveness.
Moreover, the energy consumption naturally demands high attention in any sustainable development effort, and/or in the presence of resource poor devices\cite{nota2020energy}.  
Therefore, any novel approach  must prioritize energy efficiency. 

In this work, we focus on the adaptation of standard MAB algorithms to such challenging settings.
We consider a decision making agent controlling the actions of a remote device.
The agent transmits the action to the controlled device, and the latter sends back the environment feedback over unreliable channel causing potential losses as shown in Fig.\ref{fig:system_model}.
If received successfully, it will be stored in a queue until being served by the agent, causing potential delays.
%-------------------------------------------------------------------------------------------------------

\textbf{Related Work: } 
The remote control of bandits has been investigated in prior studies.
Motivated by the web advertisement applications, the work in \cite{joulani2013online} studies the online learning problems with randomly delayed feedback.
The authors provide general algorithms that transform standard algorithms devised for the non-delayed case, termed as \textit{BASE}, into ones that handle delayed feedback. 
They show that the delay inflates the regret in a multiplicative way in adversarial problems, while it is only additive in stochastic problems.
For stochastic environments, they propose the QPM-D algorithm, which stores the feedback information coming from the environment in separate queues for each arm, enabling interacting with the queues instead of the environment.
Inspired by such a queue method, the authors in \cite{mandel2015queue} propose the stochastic delayed bandit (SDB) algorithm that empirically outperforms QPM-D while preserving its theoretical guarantees. 
It combines the BASE algorithm with a heuristic policy., which jointly guide actions in the environment. 
The BASE algorithm is updated by observing rewards from queues, while the heuristic is updated directly from the environment.
Following the same queue methodology, the work in \cite{gupta2023remote} investigates empirically the regret performance of the remote control of standard MABs. 
The authors model the communication network as a \textit{Geo/Geo/1} queue.
Motivated by the work in \cite{joulani2013online,mandel2015queue}, they propose heuristic policies that use Thompson Sampling algorithm for arm selection and reward queues for obtaining reward feedback.
They show through simulations that these heuristics have better regret performance compared to SDB algorithm from \cite{mandel2015queue}. 

In another framework, the authors in  \cite{desautels2014parallelizing}, analyze the  Gaussian Process Bandits encountering delayed rewards. 
Under the assumption that the delay follows a known stochastic distribution, in \cite{vernade2017stochastic}, the authors develop two UCB-based algorithms with delay-corrected indices.
This work is extended in \cite{vernade2020linear} to the linear contextual bandits.
The authors study the delayed noisy rewards with no prior information on the delay distribution, and develop a new algorithm based on Linear UCB (LinUCB) algorithm. 
The remote control  of contextual bandits has also been investigated in \cite{dudik2011efficient, bouneffouf2022linear, pase2022remote}.
The work in \cite{dudik2011efficient} assumes that the delay is a fixed, known, and constant, and proves it has an additive effect on the regret.
In \cite{bouneffouf2022linear}, the authors study the problem of missing rewards. 
They build upon LinUCB and use unsupervised learning to estimate the reward from the observed context.
In \cite{pase2022remote}, the authors study the problem of sending the policy through a rate-limited communication channel. 

In contrast to the queue-based methods, in this work we do not assume any storage capability at the agent side.
We rather focus on exploring methods to sample packets from the server queue such that to reduce the effect of delayed or lost packets.
We introduce a novel, simple sampling strategy that surpasses traditional sampling methods such as first-in-first-out (FIFO) and last-in-first-out (LIFO).
The performance of our approach is evaluated against existing algorithms in the literature by  investigating the trade-off between regret minimization and energy consumption via numerical analysis.

%=====================================================
\section{System Model and Problem Formulation}

We consider an actuator\footnote{Although we refer to it as ``actuator", it could represent any device or system responsible for taking actions and interacting with the environment} interacting with the environment by taking actions and receiving corresponding feedback/reward.
The environment is modeled as a $K$-armed bandit with independent and identically distributed (\iidn) rewards.
Let $\mathcal{K}$ be the set of $K$ actions (arms) available to the actuator.
We consider a stationary environment, such that the rewards of each arm are generated from a stochastic distribution $\mathcal{X}$ that is stationary, \ien, it does not change over time.
As presented in Fig. \ref{fig:system_model}, at each time slot $t$, the actuator applies one arm $a_t \in \mathcal{K}$ to the environment, and observes its corresponding reward $r_{a_t,t}\sim \mathcal{X}(\theta_{a_t})  $, where $\theta_a$ is the mean reward of arm $a$.
%Let $\bm{\theta} = \{\theta_0, \cdots ,\theta_K\}$ be the set of mean rewards of the $K$ available arms.
Inspired by \cite{gupta2023remote}, we assume the actuator is controlled by a remote agent (the decision maker) that sends  the action to be taken by the actuator at each time slot, over a reliable and zero-delay channel\footnote{This assumption can also be relaxed to incorporate packet losses and delays on this communication link.}.
After each interaction with the environment, the latter then sends back a packet containing the selected arm and the observed reward at time $t$, \ien, $(a_t,r_t)$, over an unreliable channel causing potential losses.
The packet is successfully transmitted over the communication channel and admitted to a \textit{Geo/Geo/1} queue located at the server, with a probability $\lambda \in [0,1]$; the admission process is \iid over time.
At the end of every time slot, a packet from the queue is then served and observed by the agent with a probability $\mu \in [0,1]$\footnote{The agent and the server are depicted as separate entities in the diagram for readability, although they can be physically co-located.}; the service process is also assumed to be \iid over time.
Consequently, the agent cannot observe all the rewards due to losses when $\lambda<1$, and the observations will be delayed when $\mu<1$.
The parameters $\lambda$ and $\mu$ are indicative of the agent's communication and computational resources respectively.

%----------------------------------------------------
\begin{figure}[!t]
    \centering
    \includegraphics[width=0.85\linewidth]{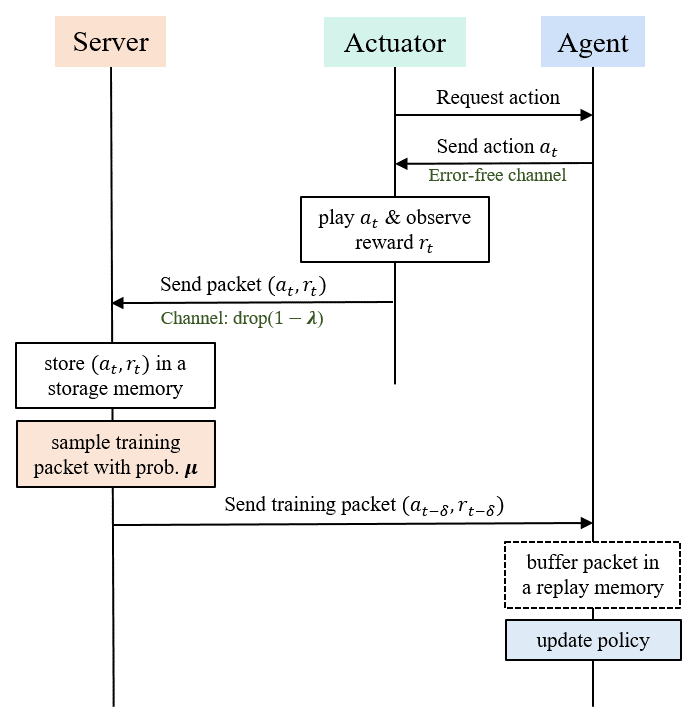}
    
    \caption{System model of remotely controlled system.}
    \label{fig:system_model}
\end{figure}
%----------------------------------------------------

The packet to be observed by the agent is sampled from the queue based on a predefined process. 
Let $\pi$ denote the sampling strategy followed by the agent to pick packets from the queue, \ien, $\pi(t)=(a,r)$.
Let $L(t)$ be the length of the queue at time $t$ and  $s$ denote the index of a packet in the queue.
Then $(a,r)_s$ corresponds to the packet of index $s$.
Conventional examples of $\pi$ include: $\pi_{\texttt{FIFO}}(t)=(a,r)_1$ and $\pi_{\texttt{LIFO}}(t)=(a,r)_{L(t)}$.

As in standard MABs, the agent's goal is to maximize the cumulative rewards over time; $\max \sum_{t=0}^{T} r_t$.
Consequently, it follows an algorithm $\mathcal{A}$ to select the arms at each time slot, \ien, $a_t \sim \mathcal{A}$. 
The agent updates algorithm $\mathcal{A}$ whenever it observes a new packet from the queue. 
Algorithm $\mathcal{A}$ can be any commonly used bandit algorithm, including $\epsilon-$greedy, UCB and TS algorithms.
Algorithm \ref{alg:network-queue} shows how standard bandit algorithms are adjusted to suit the queue-based remote controlled model, where $\mathcal{B}$ refers to the Bernoulli distribution.
If the agent is equipped with storage capability as in \cite{joulani2013online, mandel2015queue, gupta2023remote}, a replay memory could be introduced for the buffering and reuse of observed packets/rewards.
Consequently, the agent can undergo a seamless and continuous training due to the environment's stationarity, as the training of algorithm $\mathcal{A}$ no longer depends on the service rate, but rather it relies on the content stored within the replay buffer.
In this work, we focus on scenarios where the agent does not possess storage capabilities.

%==========================================================
\begin{algorithm}[htb]
  \caption{Queue-based Remote Controlled MAB  (QR-MAB)}
  \label{alg:network-queue}
    {\bf Inputs:} $\mathcal{K}$, $T$, $\mathcal{A}$, $\pi$
\begin {algorithmic}[1]
        \FOR{$t=1$ to $T$}
           \STATE Draw $a_t \sim \mathcal{A}$
           \STATE Observe $r_t \sim \mathcal{X}(\theta_{a_t})$
           \IF{$\mathcal{B}(\lambda)$=1}
                \STATE add $(a_t,r_t)$ to the queue
           \ENDIF
           \IF{$\mathcal{B}(\mu)$=1}
                \STATE sample a packet $\pi(t)=(a,r)$
                \STATE remove $(a,r)$ from queue
                \STATE update $\mathcal{A}$ with $(a,r)$
            \ENDIF
        \ENDFOR
\end {algorithmic}
\end{algorithm}
%==========================================================

The decision of $\mathcal{A}$ at time $t$ is then dependent on the sequence of feedback acquired from the network up to time $t-1$ which in turn relies on the sampling policy $\pi$ employed.
For a time horizon $T$, the cumulative pseudo-regret of a policy $\mathcal{A}$ is defined as:
\begin{equation}
\label{eq:pseudo-regret}
    \mathcal{R}^\mathcal{A}(T)=T\theta^*-\sum_{t=1}^T \theta_{a_t},
\end{equation}
where, $\theta^*=\max_{a}\theta_a$ is the best mean reward.

As shown in \cite{gupta2023remote}, the performance of any algorithm is influenced by the sampling policy $\pi$ followed by the agent, and obviously on the system parameters $\lambda$ and $\mu$. 
Indeed, each algorithm is impacted differently by the same sampling policy.
In this work, we aim to find the optimal sampling policy for a given MAB algorithm $\mathcal{A}$ that minimizes the cumulative pseudo-regret, \ien, equivalently maximizing the total acquired rewards.

%========================================================
\section{Queue-based Remote Control of Upper Confidence Bound Algorithm}

In this work, we focus on the utilization of one of the most  popular bandit algorithms that deal with stationary stochastic MABs, \ien, the Upper Confidence Bound algorithm \cite{auer2002finite}, due to its inherent simplicity, efficiency, and compatibility with resource-limited environments.
UCB maintains an estimate of the mean reward of each arm $\hat{\theta}_{a,t}$, and considers an upper bound  $U_{a,t}$ over the empirical mean:
\begin{equation}
U_{a,t}=\hat{\theta}_{a,t}+\sqrt{\dfrac{2\ln N_t}{N_{a,t}}},
\end{equation}
where $N_{a,t}$ and $N_t$ are the number observations of arm $a$  and the total number of observations up to time $t$ respectively. 
Here, $N_{a,t}$, $N_t$, $U_{a,t}$ and $\hat{\theta}_{a,t}$ are only updated whenever a new packet is observed.
At each time step, it selects the arm of the highest upper bound, \ien, $\argmax_{a\in\mathcal{K}} U_{a,t}$. 

In the following proposition, we present the expected number of observations that holds for any MAB algorithm and any sampling policy within this queue system.
This value significantly impacts system performance, as demonstrated in the sequel, given its direct impact on updating the MAB algorithm and subsequent convergence toward optimal arm selection.

\begin{prop}
\label{prop:nb_of_observ}
    Given a Geo/Geo/1 queue of infinite length, with arrival rate $\lambda$, and  service rate $\mu$.
    The expected number of served packets at time $T$ is:
    $$\mathbb{E}[N_T]=\min (\lambda,\mu)\times T.$$
\end{prop}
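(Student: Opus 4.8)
The plan is to compute the long-run \emph{throughput} of the Geo/Geo/1 queue and to show that $\mathbb{E}[N_T]$ equals this rate times $T$ up to transient terms. Following the slotted dynamics of Algorithm~\ref{alg:network-queue}, let $A_t\sim\mathcal{B}(\lambda)$ denote the admission indicator and $B_t\sim\mathcal{B}(\mu)$ the service-opportunity indicator at slot $t$, all mutually independent and independent across $t$. A packet leaves the queue at slot $t$ precisely when $B_t=1$ \emph{and} the queue is non-empty at the service instant; writing $D_t\in\{0,1\}$ for this departure indicator, $N_T=\sum_{t=1}^T D_t$. Since $B_t$ is an independent coin, tossed regardless of occupancy, $\mathbb{E}[D_t]=\mu\,\Pr[L(t)\neq 0]$, so
\begin{equation*}
\mathbb{E}[N_T]=\mu\sum_{t=1}^T\Pr[L(t)\neq 0],
\end{equation*}
while, starting from an empty queue, flow conservation $\sum_{t=1}^T A_t=N_T+L(T)$ gives the complementary identity $\mathbb{E}[N_T]=\lambda T-\mathbb{E}[L(T)]$.

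First I would treat the strictly stable regime $\lambda<\mu$. Here $L(t)$ is a positive-recurrent Markov chain (a standard Loynes/Lindley drift argument, or the explicit Geo/Geo/1 computation), with a geometric stationary law whose mean depends only on $\lambda$ and $\mu$ and is finite; starting empty, $\mathbb{E}[L(T)]$ therefore stays bounded, and the conservation identity gives $\mathbb{E}[N_T]=\lambda T-O(1)$. Equivalently, in steady state $\Pr[L(t)\neq 0]=\lambda/\mu$, so the first expression again returns rate $\lambda$. Next, the overloaded regime $\lambda>\mu$: the one-step drift of $L(t)$ is now strictly positive, so $L(t)\to\infty$ almost surely and $\Pr[L(t)\neq 0]\to 1$; substituting into $\mathbb{E}[N_T]=\mu\sum_t\Pr[L(t)\neq 0]$ shows the server is busy on all but a vanishing fraction of slots, giving rate $\mu$. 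In either case the served-packet rate is $\min(\lambda,\mu)$, which is the claimed identity at leading order in $T$.

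I expect the main obstacle to be the bookkeeping of the lower-order terms rather than the leading one. The crude pointwise bound $N_T\le\min\bigl(\sum_t A_t,\ \sum_t B_t\bigr)$ already yields $\mathbb{E}[N_T]\le\min(\lambda,\mu)\,T$, with slack since $\mathbb{E}[\min(X,Y)]\le\min(\mathbb{E}X,\mathbb{E}Y)$, so the substance is the matching lower bound: asymptotically almost every admitted packet is served when $\lambda<\mu$, and asymptotically almost every service opportunity is used when $\lambda>\mu$. Concretely this reduces to bounding the residual quantity $\mathbb{E}[L(T)]$, which is $O(1)$ away from the critical line and, on the critical line $\lambda=\mu$ (where $L(t)$ is null-recurrent), still only $O(\sqrt{T})$; in all cases it is $o(T)$, so that $\mathbb{E}[N_T]=\min(\lambda,\mu)\,T+o(T)$. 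This is the sense in which the proposition holds, and the form in which it feeds the regret analysis that follows.
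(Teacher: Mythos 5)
Your argument is correct, and it is substantially more rigorous than the proof the paper actually gives: the paper's proof is a two-sentence heuristic that simply asserts "if $\mu\le\lambda$ the expected number of served packets is $\mu T$" and symmetrically for the other case, with no accounting for the empty-queue slots or the transient. Your route --- the departure identity $\mathbb{E}[N_T]=\mu\sum_{t\le T}\Pr[L(t)\neq 0]$ combined with flow conservation $\mathbb{E}[N_T]=\lambda T-\mathbb{E}[L(T)]$, then a positive-recurrence argument for $\lambda<\mu$ and a positive-drift argument for $\lambda>\mu$ --- is a genuine proof of what the proposition can actually deliver, and it buys something the paper's version hides: the exact equality $\mathbb{E}[N_T]=\min(\lambda,\mu)\,T$ is strictly false for a queue started empty (one always has $\mathbb{E}[N_T]<\min(\lambda,\mu)\,T$, by your own upper bound argument together with the fact that $\Pr[L(t)=0]>0$ for all $t$ in the overloaded case and $\mathbb{E}[L(T)]>0$ in the stable case). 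Your conclusion that the statement holds only in the sense $\mathbb{E}[N_T]=\min(\lambda,\mu)\,T+o(T)$, with an $O(1)$ correction off the critical line and $O(\sqrt{T})$ at $\lambda=\mu$, is the honest form of the result; the paper's version should be read as this leading-order statement, and if you were writing this up you should state it that way rather than as an identity. The only cosmetic caution is to fix a convention for whether $L(t)$ in the departure identity is the pre- or post-arrival occupancy (Algorithm~\ref{alg:network-queue} admits the arrival before the service draw in the same slot), but this does not affect any of the asymptotics.
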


\begin{proof}
    If $\mu \leq \lambda$, then the expected number of served packets (observations) at time $T$ is $\mathbb{E}[N_T]=\mu \times T.$
    If $\mu \geq \lambda$, then $\lambda$ limits the number of packets that can be served. 
    The agent can serve only the packets that arrive, and the expected number of served packets remains $\lambda \times T$.
    Consequently, the expected number of served packets is $\mathbb{E}[N_T]=\min (\lambda,\mu)\times T.$
\end{proof}

\noindent
\textbf{UCB performance in QR-MAB. } 
In order to study the performance of UCB algorithm in remote control, we consider a $5$-armed bandit with Bernoulli rewards, such that $\theta_a \sim \mathcal{U}(0,1)$.
We run Algorithm \ref{alg:network-queue} for $T=5000$, and average over $1000$ Monte-Carlo simulations each of a different reward distribution. 
We compared the results of the most well-known standard sampling strategies, \ien, FIFO and LIFO.

%%%%%%%%%%%%%%%%%%%%%%%%%%%%%%%%%%%%%%%%%%%%
\begin{figure}
    \centering
    
    \begin{subfigure}{0.35\textwidth}
        \includegraphics[width=\linewidth]{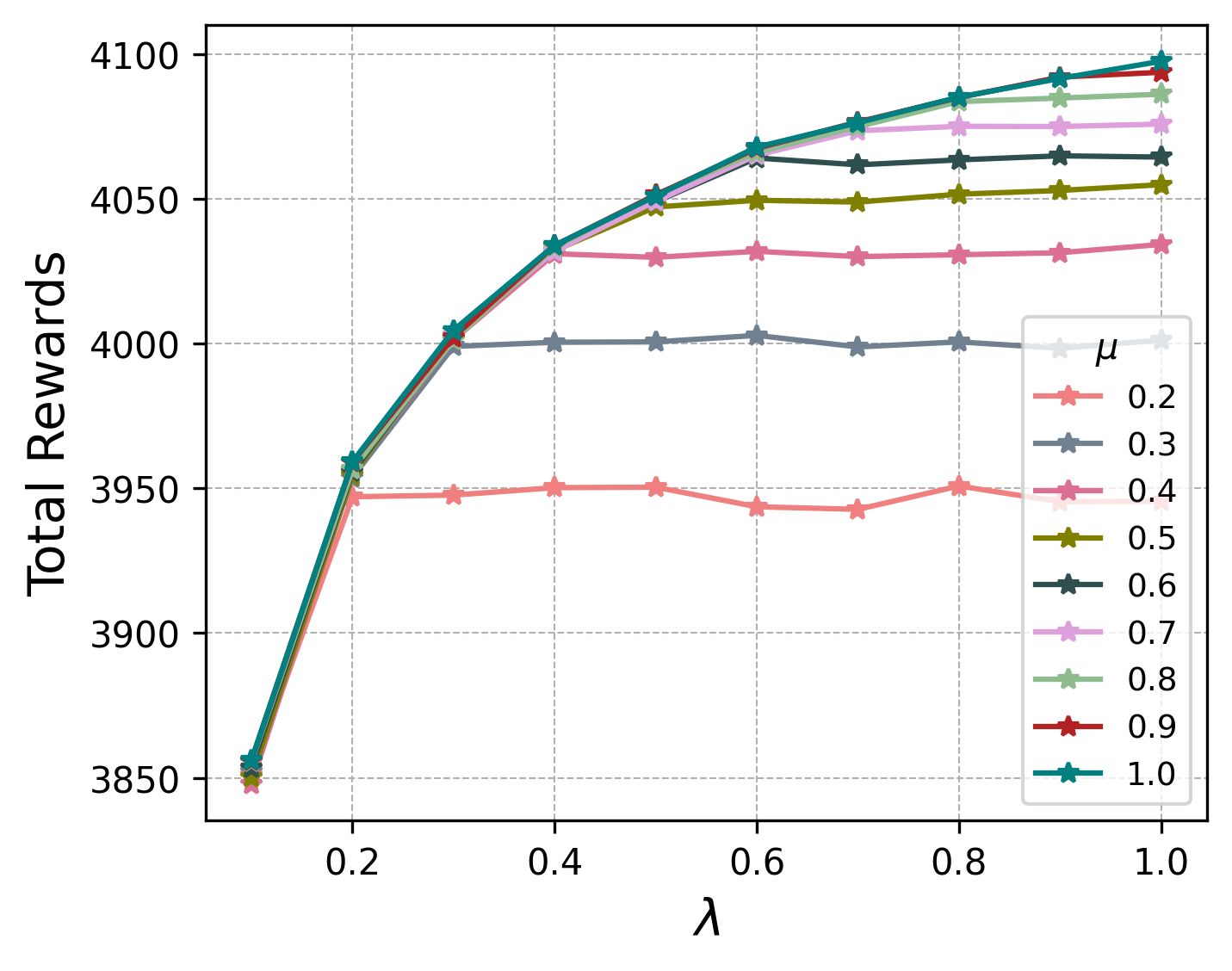}
        \caption{LIFO}
        \label{fig:lifo_lambda}
    \end{subfigure}
    \begin{subfigure}{0.35\textwidth}
        \includegraphics[width=\linewidth]{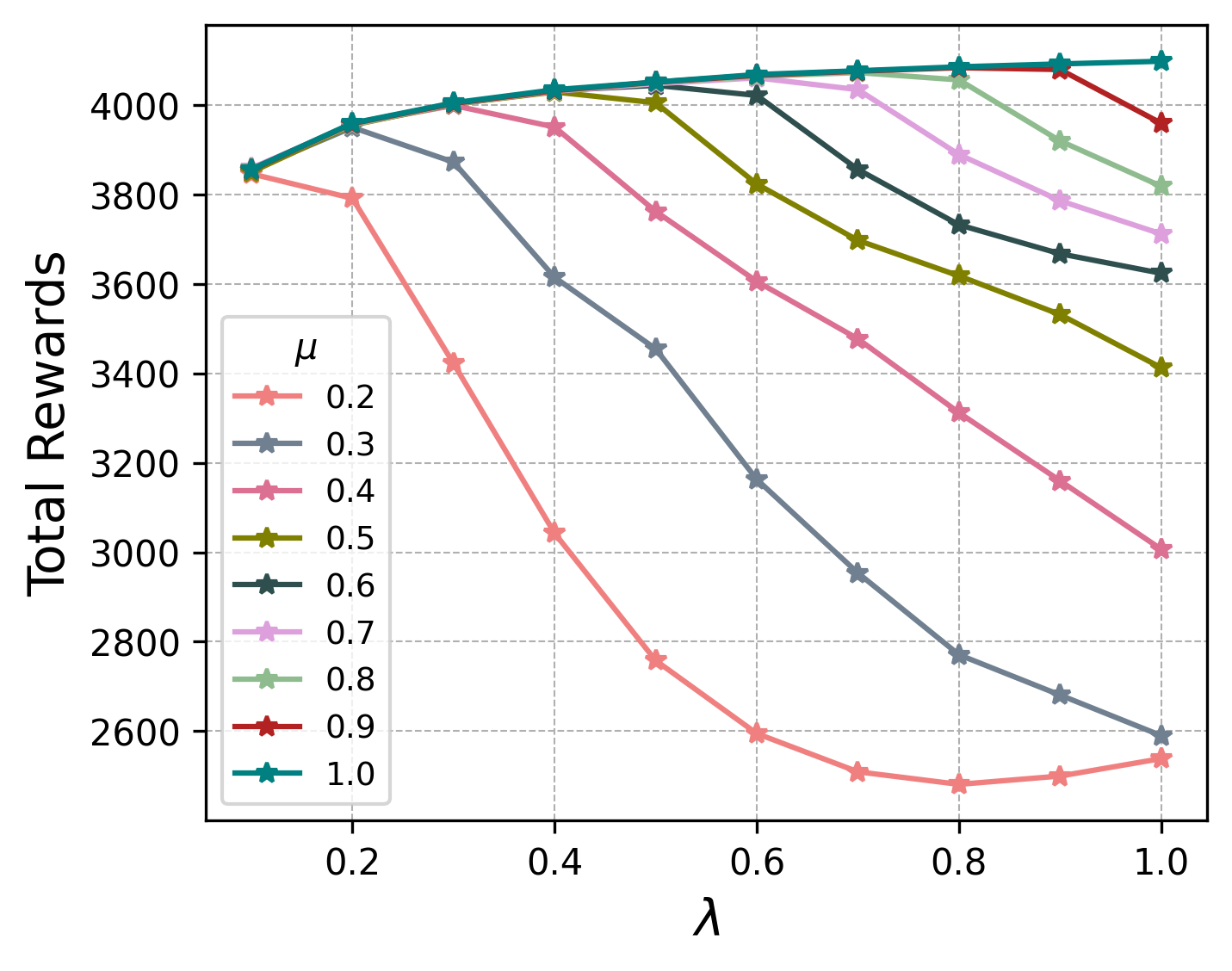}
        \caption{FIFO}
        \label{fig:fifo_lambda}
    \end{subfigure}
    \caption{Total reward of the UCB algorithm under LIFO and FIFO sampling strategies.}
    \label{fig:ucb_standardAlgo_flifo}
\end{figure}
%%%%%%%%%%%%%%%%%%%%%%%%%%%%%%%%%%%%%%%%%%%%%%

Fig. \ref{fig:ucb_standardAlgo_flifo} shows the total reward gained by UCB at $T=5000$ with LIFO and FIFO sampling strategies as a function of $\lambda$ for various values of $\mu$.
As shown in Fig. \ref{fig:lifo_lambda}, with LIFO, the total rewards increase as a function of $\lambda$, as long as $\lambda \leq \mu$, since the number of observations increases. 
Then, as $\lambda > \mu$, the cumulative rewards remain constant, since the number of observations is fixed as shown in Proposition \ref{prop:nb_of_observ}.
Similar trends can be observed  for FIFO when $\lambda < \mu$. 
However, when $\lambda \geq \mu$, the sum of rewards decreases with $\lambda$, although the number of observations is fixed.
As we consider here a stationary environment, the age of information (age of packet) is not effective, due to the constant reward distribution over time.
To explain the behaviour of FIFO sampling strategy, we need to note that, since UCB is a deterministic policy (it selects the action of the highest UCB), in the absence of observations it selects the same action until a new observation is available, and hence an update occurs.
This results in a stack\footnote{To enhance readability, we will use the term ``stack" to refer to a set of consecutive packets belonging to the same arm in the queue.} of packets corresponding to the same arm in the queue, which grows in size if $\lambda\geq\mu$.
Consequently, in FIFO, the agent needs to observe all the packets of the stack sequentially before observing another arm.
This results in the updating of only one arm during this period, potentially leading to inaccurate estimations of the arms. 
In this case, the agent may end up with sub-optimal arms.
Conversely, this is less likely to occur with the LIFO policy, since the stack is broken with the last packet arriving to the queue and being observed by the agent.

Following LIFO sampling strategy, the agent is able to break a packet stack. 
However, it remains inherently sequential and the observation of several packets with the same arm could occur due to the absence of arrivals.

%=======================================================

\section{Stochastic Biased Sampling Strategy}

We observed that employing standard queuing strategies such as FIFO and LIFO can cause the agent to process entire stacks sequentially, leading to the same arm being updated repeatedly, which prevents the observation of feedback from other arms.
This lack of randomness may lead to suboptimal performance, due to mis-estimating the mean rewards of other unobserved arms. 

To overcome this issue, a more dynamic approach is needed, to avoid sequential stack processing through random packet selection. 
For this sake, we propose the stochastic biased sampling strategy $\pi_{\delta u}(t)$, which samples the packet $(a,r)_s$ with probability $f_{\delta u}(s)$, where:
\begin{align}
    f_{\delta u}(s)&=\alpha \times \delta(s-c)+(1-\alpha)\times\mathcal{U}(L(t)),
\end{align}
where, $\alpha \in [0,1]$ is the randomness-control parameter, $\delta(s-c)$ is the Dirac function centered at $c$, which is an index of a packet in the queue and $\mathcal{U}(L(t))$ is the uniform distribution over the interval $[0,L(t)]$, \ien, the packets indices in the queue.   
In practice, $\pi_{\delta u}(t)$ samples the packet of index $c$ with probability $\alpha$, and samples a uniformly selected packet in the queue with a probability $1-\alpha$.
Such a sampling strategy introduces randomness, ensuring that all actions are observed and updated fairly.
In the following, we study the performance of this stochastic biased strategy.

%=================================================
%%%%%%%%%%%%%%%%%%%%%%%%%%%%%%%%%%%%%%%%%%%%

\begin{figure}[t]
    \centering
    
    \includegraphics[width=0.75\linewidth]{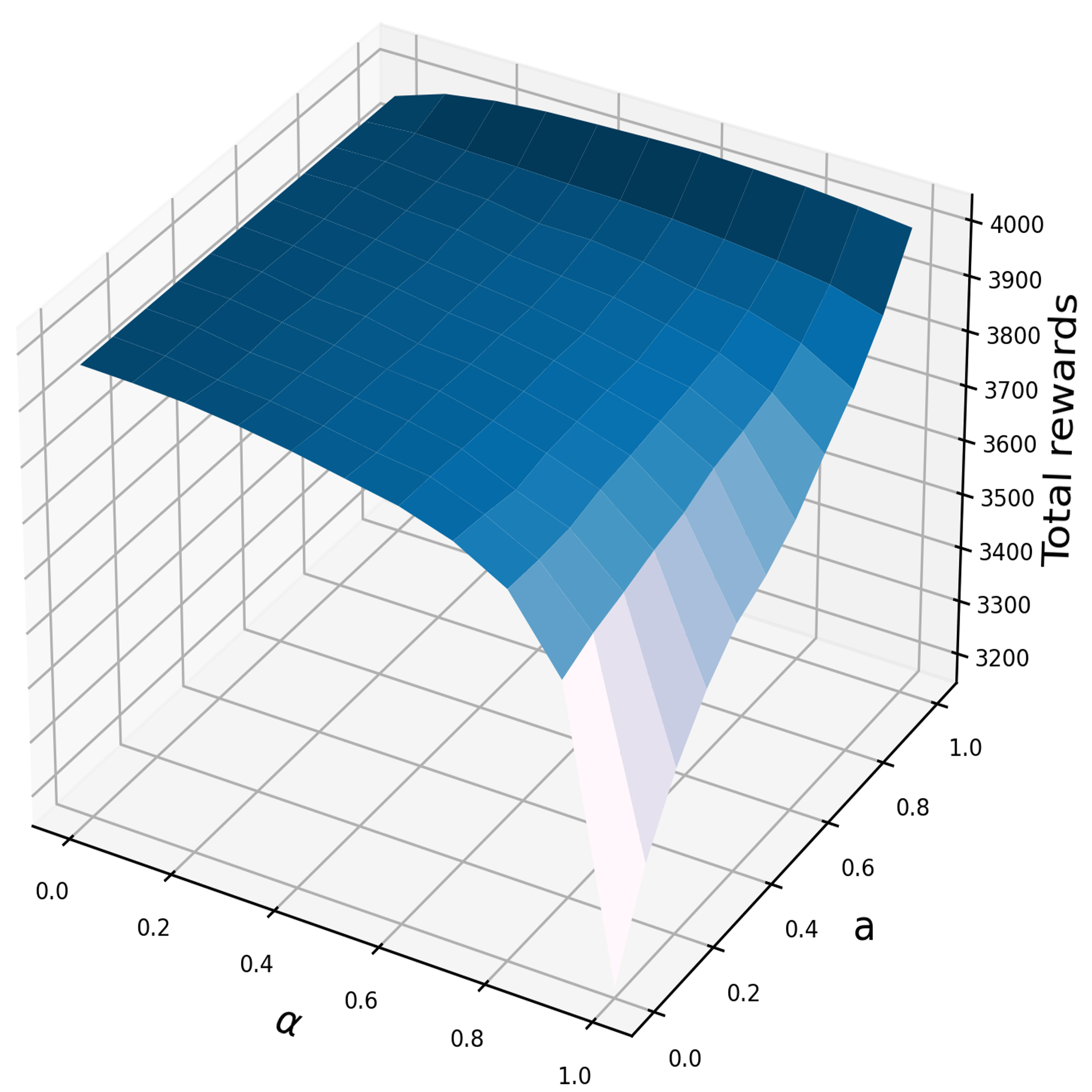}
        
    \caption{UCB total rewards with $\pi_{\delta u}$ with different values of $\alpha$ and $a$ at $\mu=0.3$ and $\lambda=0.6$}
    \label{fig:ucb_standardAlgo_DU}
\end{figure}
%%%%%%%%%%%%%%%%%%%%%%%%%%%%%%%%%%%%%%%%%%%%%%

%========================================================
\begin{figure}[t]
    \centering
    
    \includegraphics[width=0.75\linewidth]{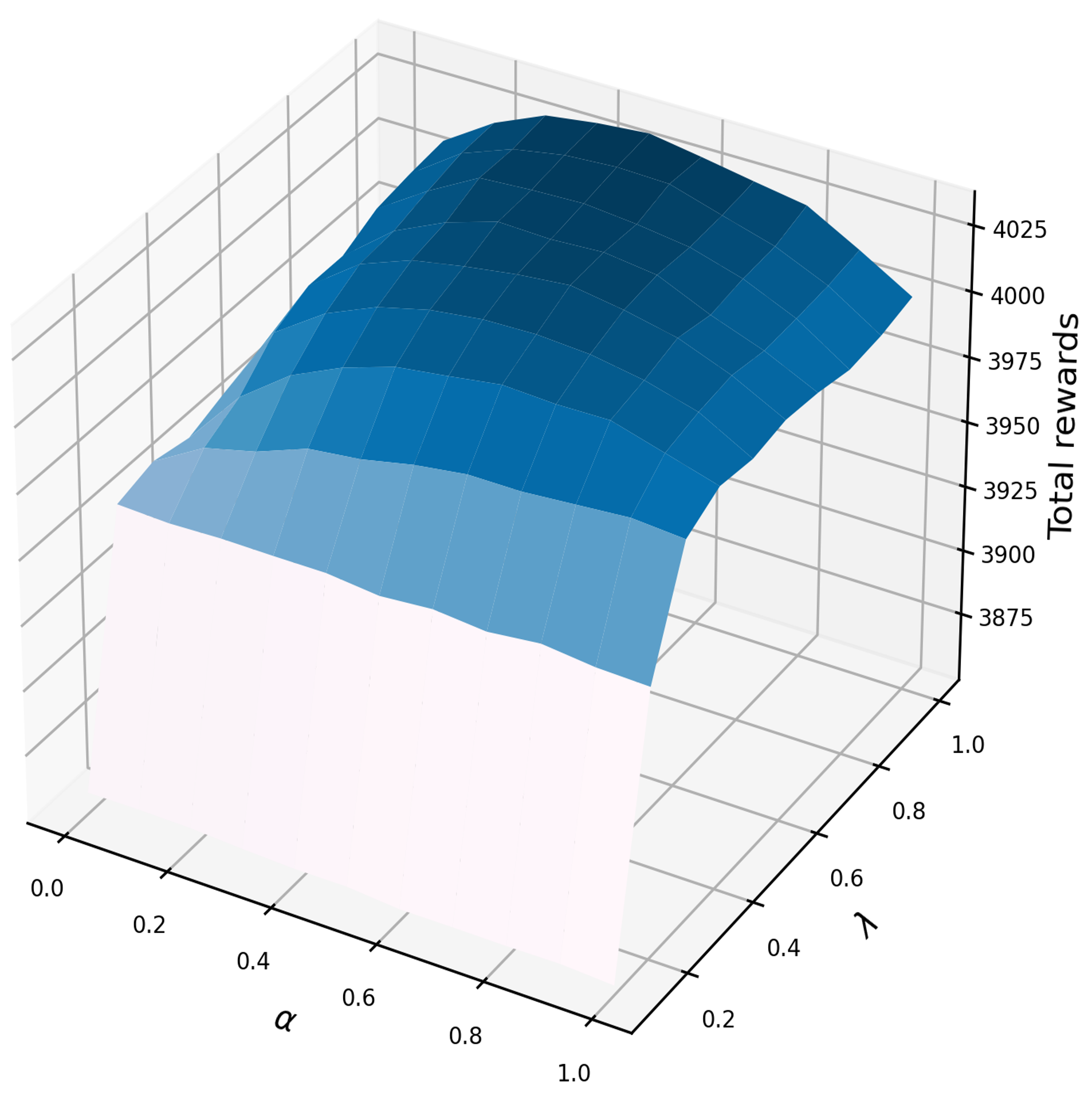}
        
    \caption{UCB total rewards following $\pi_{\delta u}$ sampling strategy with $a=1$, $\mu=0.3$ and different values of $\alpha$ and $\lambda$.}
    \label{fig:ucb_standardAlgo_DU-0.3}
\end{figure}
%========================================================

\section{Numerical Analysis}

We run the previous experiments with $\pi_{\delta u}$ and different values of $c$ and $\alpha$. 
As the queue length $L$ changes over time, the value of $c$ varies with the queue length. 
We consider $c(t)=a\times L(t)$ where $a\in [0,1]$. 
The value of $a$ controls Dirac's centering with respect to the queue length, \eg if $a=1$, then the Dirac is centered at the last packet arrived to the queue.

Fig. \ref{fig:ucb_standardAlgo_DU} presents the total rewards obtained by UCB while following the stochastic biased sampling strategy $\pi_{\delta u}$ with different values of $\alpha$ and $a$ (equivalently, $c$). 
As we are interested in the case when $\lambda > \mu$, we consider $\lambda=0.6$ and $\mu=0.3$.
We can notice the huge gap in performance between LIFO ($a=1, \alpha=1$) and FIFO ($a=0, \alpha=1$).
Also, it is evident that the maximum total reward is obtained with $a=1$, which corresponds to the case in which the last packet admitted to the queue is sampled with probability $\alpha$.

In Fig. \ref{fig:ucb_standardAlgo_DU-0.3}, the total rewards obtained by UCB with $a=1$ and $\mu=0.3$ are illustrated for different values of $\lambda$ and $\alpha$.
We can notice no difference in performance when $\lambda < \mu$, since as long as the service rate is greater than the arrival rate, the queue is stable, preventing the accumulation of a long backlog in the buffer, with subsequent no impact of $\alpha$.
On the other hand, we notice that the optimal value of $\alpha$ corresponding to the best performance varies with $\lambda$, however, it consistently converges around $0.5$, outperforming both LIFO ($\alpha=1$) and pure uniform sampling strategy ($\alpha=0$).
This implies that incorporating uniform sampling alongside prioritizing the most recently entered packet could be beneficial, serving to disrupt the formation of packet stacks in the queue.

\subsection{Regret performance of stochastic biased sampling strategy vs. literature}

%%%%%%%%%%%%%%%%%%%%%%%%%%%%%%%%%%%%%%%%%%%%

\begin{figure}[t]
    \centering
    
    \includegraphics[width=\linewidth]{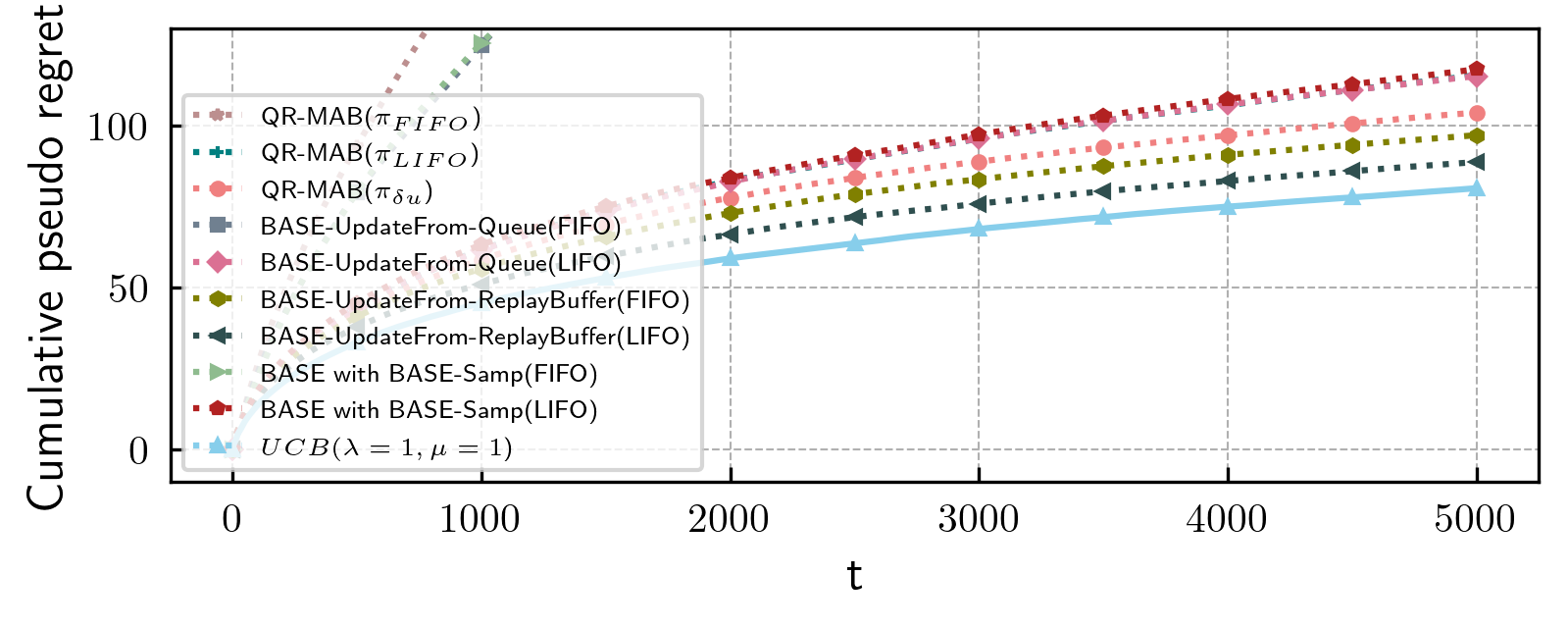}
        
    \caption{Cumulative pseudo regret of UCB for different policies with $\lambda=0.8$, $\mu=0.6$.}
    \label{fig:ucb_regret}
\end{figure}
%%%%%%%%%%%%%%%%%%%%%%%%%%%%%%%%%%%%%%%%%%%%%%
%==========================================================
\begin{figure}
    \centering
    
    \includegraphics[width=\linewidth]{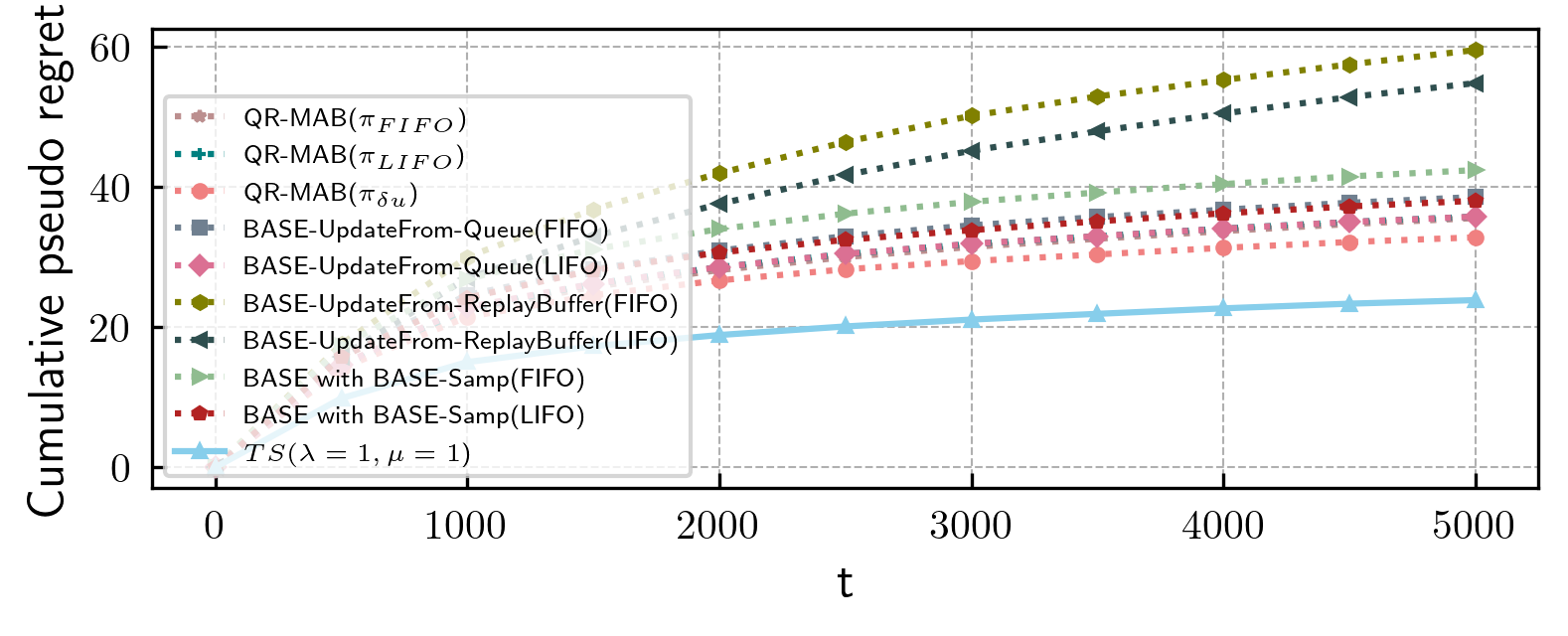}
        
    \caption{Cumulative pseudo regret of TS for different policies with $\lambda=0.8$, $\mu=0.6$.}
    \label{fig:ts_regret}
\end{figure}
%==========================================================

We evaluate the performance of our stochastic biased sampling strategy against recent approaches from the literature that suit our settings.
In \cite{gupta2023remote}, the authors introduced three heuristics to mitigate the impact of delays in remote controlled MABs: BASE-UpdateFrom-Queue, BASE-UpdateFrom-ReplayBuffer, BASE with BASE-Samp.
Their common characteristic is the assumption of storage capabilities at the agent.
Their core concept is to maintain distinct queues for observed packets, one for each arm.
Following either FIFO or LIFO, the agent retrieves a packet from the network queue and places it in the corresponding arm queue. 
We hereby conduct a comparative analysis of their performance and compare them against QR-MAB with the stochastic biased sampling strategy.

Following the same experimental settings as for Fig. \ref{fig:ucb_standardAlgo_DU-0.3}, in Fig. \ref{fig:ucb_regret} we plot the cumulative pseudo-regret (Eq. \eqref{fig:ucb_regret}) for different policies with $\lambda=0.8 $ and $ \mu=0.6$.
We compare with the scenario without losses or delays (UCB($\lambda=1, \mu=1$)).
We can notice that QR-MAB with the stochastic biased sampling strategy (QR-MAB($\pi_{\delta u}$)) at $a=1$ and $\alpha=0.5$, outperforms all policies, with the exception of ``BASE-UpdateFrom-ReplayBuffer".
This is because the latter refrains from removing the packets from the arm queues, ensuring that the arms are consistently updated whenever at least one packet has been observed for that arm.
However, with this accumulation of reward data for all arms, the storage requirements can become substantial, potentially posing challenges in storage-constrained environments. 
Additionally, ``BASE-UpdateFrom-ReplayBuffer"  is associated with high-energy consumption as will be demonstrated in further experiments.
This introduces a notable trade-off between maximizing rewards and minimizing the energy and storage consumption. 

Furthermore, the authors in \cite{gupta2023remote} focused their analysis on Thompson Sampling algorithm \cite{kaufmann2012thompson}, showing that the latter does not exhibit any difference between LIFO and FIFO.
In Fig. \ref{fig:ts_regret}, we compare the cumulative pseudo-regret of the aforementioned policies to QR-MAB($\pi_{\delta u}$) with $a=1$ and $\alpha=0.5$ for the Thompson Sampling algorithm ($\mathcal{A}=$TS).
We can observe that our simple strategy achieves the lowest cumulative pseudo regret outperforming all other policies. 
This indicates that the randomness in updating the arms is more effective.
Besides, QR-MAB($\pi_{\delta u}$) is less storage and energy consuming. 

%=================================================
\subsection{Energy consumption vs. Total rewards}
As the energy consumption is critical in various applications, we analyze it for the aforementioned algorithms and compare it to their respective total rewards. 
For this sake, we make use of the codeCarbon \cite{codecarbon} package, which enables the estimation and tracking of the energy consumption associated to a running code.
We estimate the energy consumed by different policies with UCB algorithm, by running the previous experiments and averaging over $500$ Monte-Carlo simulations.
Let $\E_p$ be the energy consumed and $\R_p$ the total rewards acquired by a policy $p$ within the timeframe up to $T=5000$.
$\E_{\max}$ and $\R_{\max}$ respectively denote the energy consumed and the total rewards acquired by UCB when $\lambda=1$ and $\mu=1$.
At the same time, we denote by $\E_{\min}$ and $\R_{\min}$ respectively the energy consumption and the total reward with random arm selection for $T=5000$; in other words, no feedback is needed and no learning takes place in the latter case.
A policy $p$ could be any of the aforementioned policies, \ien, QR-MAB, ``BASE-UpdateFrom-Queue", ``BASE-UpdateFrom-ReplayBuffer", \etc
In order to evaluate the relative performance of the algorithms, we define two metrics: (i) reward loss indicator (RLI): $\text{RLI}(p)=1-(\R_p-\R_{\min})/(\R_{\max}-\R_{\min})$, and (ii) energy saving indicator (ESI): $\text{ESI}(p)=1-(\E_p-\E_{\min})/(\E_{\max}-\E_{\min}).$

%========================================================
\begin{figure}[t]
    \centering

    \includegraphics[width=0.85\linewidth,clip]{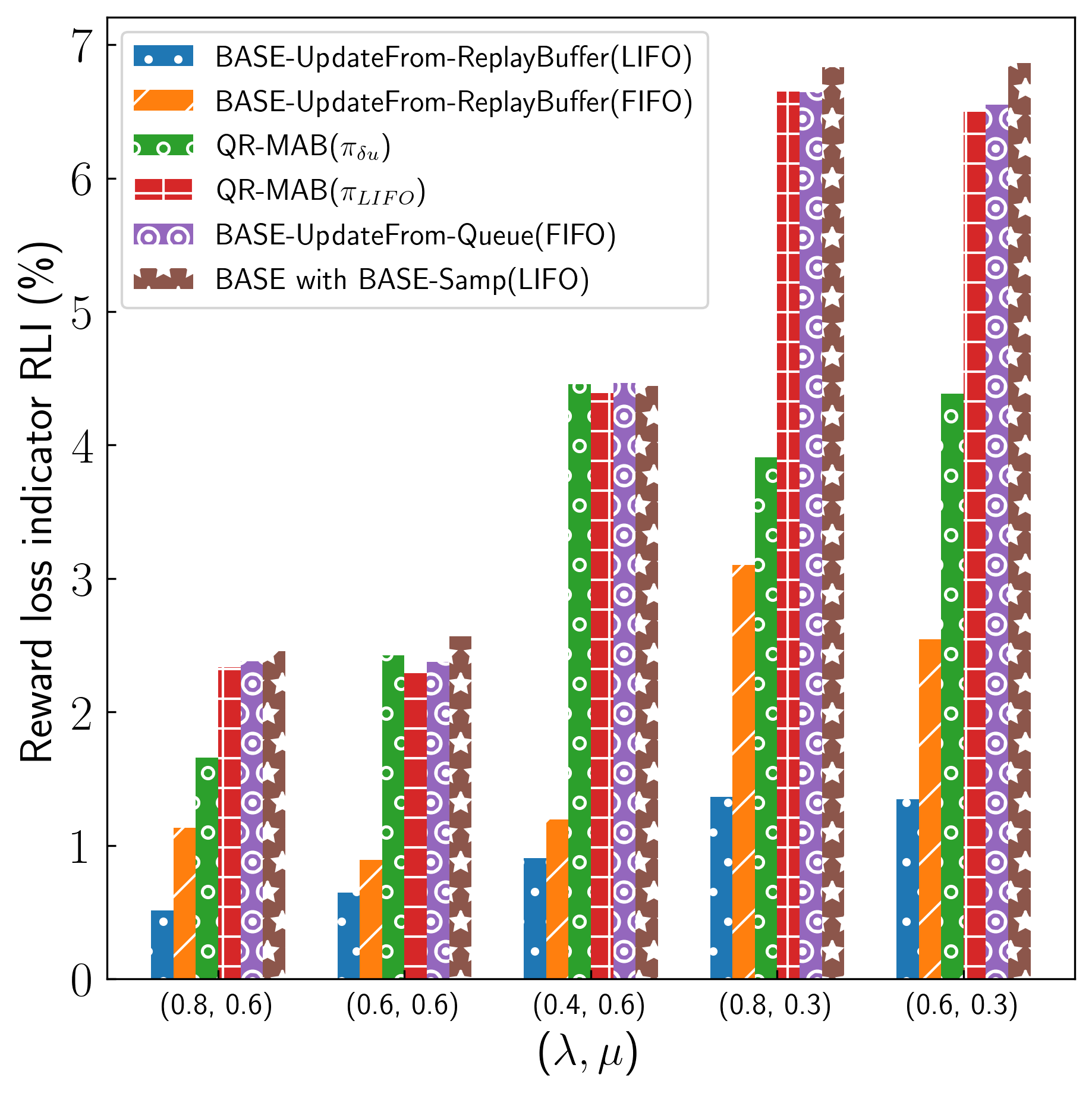}
        
    \caption{UCB reward loss rate of different algorithms at different $(\lambda,\mu)$ values.}
    \label{fig:RLR}
\end{figure}
%========================================================
\begin{figure}[t]
    \centering
    
    \includegraphics[width=0.85\linewidth, clip]{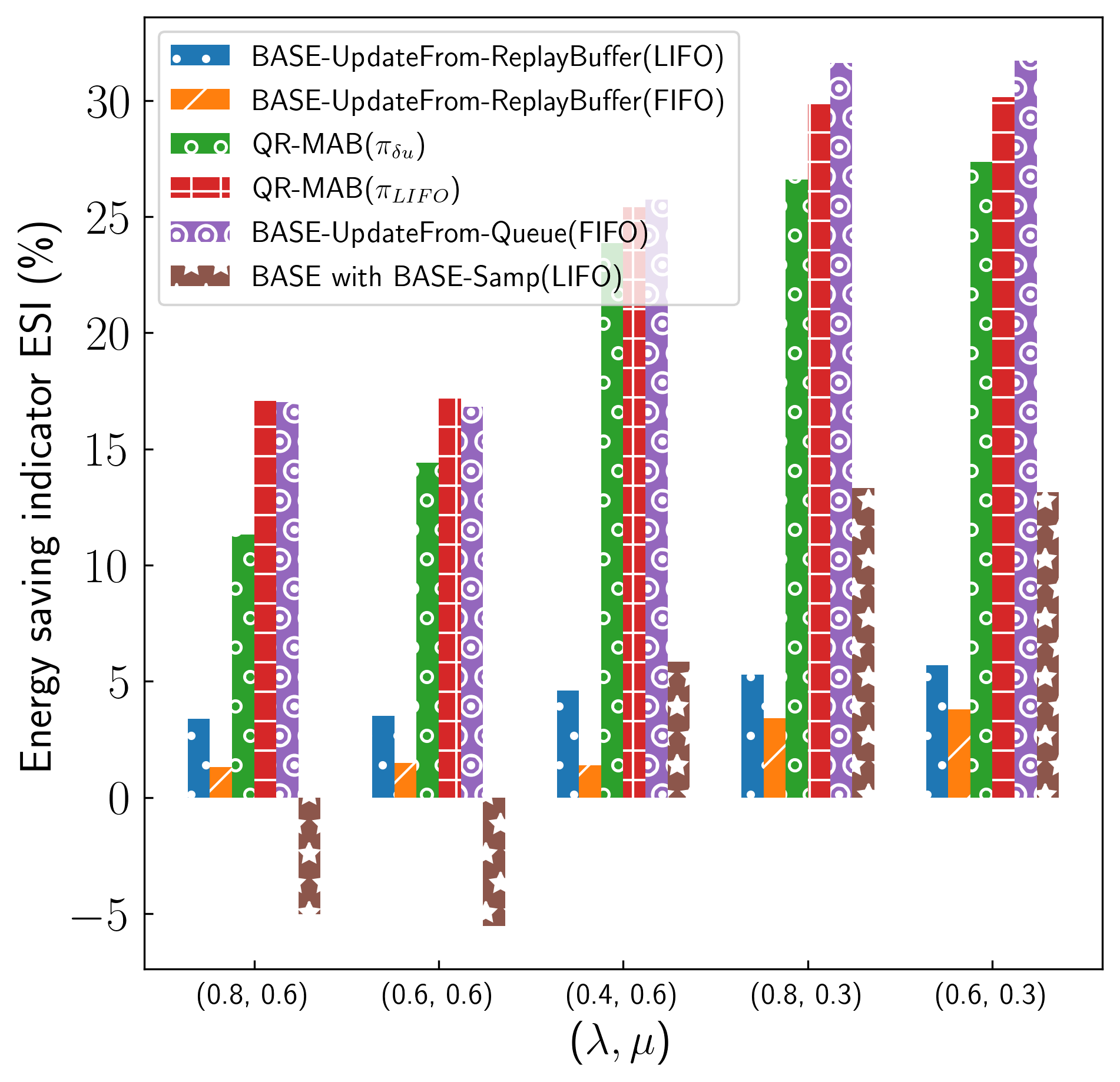}
        
    \caption{UCB energy saving rate of different algorithms at different $(\lambda,\mu)$ values.}
    \label{fig:ESR}
\end{figure}
%========================================================

Figs. \ref{fig:RLR} and \ref{fig:ESR} show the values of the RLI and ESI of different policies for different values of $(\lambda,\mu)$.
First, we can notice how the performance of QR-MAB with $\pi_{\delta u}$ sampling strategy improves when $\lambda > \mu$ in terms of reward loss rate.
Fig. \ref{fig:RLR} confirms the previous results of the cumulative regret in Fig. \ref{fig:ucb_regret}.
Although ``BASE-UpdateFrom-ReplayBuffer" exhibits a lower reward loss, not only its storage memory consumption is higher, but it is also high energy consuming, as shown in Fig. \ref{fig:ESR} (low ESR), whereas QR-MAB($\pi_{\delta u}$) is amongst the least energy-consuming algorithms. 
Taking the results for $(\lambda,\mu)=(0.8,0.3)$, we can notice that ``BASE-UpdateFrom-ReplayBuffer (LIFO)" achieves around $2\%$ lower RLI than QR-MAB ($\pi_{\delta u}$), while the latter saves around $20\%$ more energy. 
These results demonstrate a considerable trade-off between minimizing reward loss and maximizing energy saving, and the fact that QR-MAB($\pi_{\delta u}$) balances between achieving reasonable rewards and consuming energy. 
This potentially leads to more robust and sustainable performance. 

%====================================================

\section{Conclusion}
In this paper, we investigated the scenario of an agent controlling the actions of a remote actuator through MAB algorithms.
After each interaction with the environment, the actuator transmits the feedback to the agent over an unreliable channel.
Upon successful reception, the feedback is stored in a queue, to be then served by the agent, who employs a specific sampling strategy to select packets from the queue.
We showed that sequential sampling strategies, such as LIFO and FIFO, mitigate the performance of UCB algorithm, and proposed a novel stochastic strategy. 
The latter exhibits better regret performance than recent algorithms from the literature for TS algorithm. 
However, for the UCB, a notable trade-off exists between maximizing rewards and minimizing energy.
The new stochastic sampling strategy strikes a balance between optimizing rewards and minimizing energy consumption.
Future research will involve conducting theoretical regret analysis.
Additionally, this work could be further extended to address non-stationary environments, where the age of information is more significant.

\section*{Acknowledgements}
This work was funded by the French government under the France 2030 ANR program “PEPR Networks of the Future” (ref. 22-PEFT-0007).

%==========================================================
\bibliographystyle{ieeetr}
\bibliography{biblio.bib}

%=======================================================================

\end{document}